\newtheorem{thm}{Theorem}
\newtheorem{con}{Condition}
\newtheorem{rk}{Remark}
\newtheorem{ex}{Example}
\numberwithin{equation}{section}\setcounter{tocdepth}{1}
\newcommand{\s}{{\sigma}}
\newcommand{\de}{{\xi}}
\newcommand{\bea}{\begin{eqnarray}}
\newcommand{\eea}{\end{eqnarray}}
\begin{document}
\title[Non-translation-invariant Gibbs measure]{Non-translation-invariant Gibbs Measures
for Models With Uncountable Set of Spin Values on a Cayley Tree}

\author{U.\ A.\ Rozikov, G.\ I.\ Botirov}

  \address{U.\ A.\ Rozikov and G.I.Botirov\\Institute of mathematics,
81, Mirzo Ulug'bek str., 100125, Tashkent, Uzbekistan.}
\email {rozikovu@yandex.ru \ \ botirovg@yandex.ru}

\begin{abstract} We consider models with nearest-neighbor
interactions and with the set $[0,1]$ of spin values, on a Cayley
tree of order $k\geq 1$.
 It is known that the "splitting Gibbs measures" of the model
can be described by solutions of a nonlinear integral
equation. Recently, solving this integral equation some periodic
(in particular translation-invariant) splitting Gibbs measures were found. In
this paper we give three constructions of new sets of non-translation-invariant splitting Gibbs measures.
Our constructions are based on known solutions of the integral equation.
\end{abstract}
\maketitle

{\bf Mathematics Subject Classifications (2010).} 82B05, 82B20 (primary);
60K35 (secondary)

{\bf{Key words.}} Cayley tree, configuration, Gibbs measures, uniqueness.

\section{Introduction} \label{sec:intro}

Let us first give necessary definitions, then explain what is the main problem;
secondly we give the history of its solutions and then formulate the part of the
problem which we want to solve in this paper.

A {\it Cayley tree} $\Gamma^k$
of order $ k\geq 1 $ is an infinite tree, i.e., a graph without
cycles, such that exactly $k+1$ edges originate from each vertex.
Let $\Gamma^k=(V, L)$ where $V$ is the set of vertices and  $L$ the set of edges.

Two vertices $x$ and $y$ are called {\it nearest neighbors} if there exists an
edge $l \in L$ connecting them.
We will use the notation $l=\langle x,y\rangle$.

A collection of nearest neighbor pairs $\langle x,x_1\rangle, \langle x_1,x_2
\rangle,...,\langle x_{d-1},y\rangle$ is called a {\it
path} from $x$ to $y$. The {\it distance} $d(x,y)$ on the Cayley tree
is the number of edges of the shortest path from $x$ to $y$.

For a fixed $x^0\in V$, called the {\it root}, we set
\begin{equation*}
W_n=\{x\in V\,| \, d(x,x^0)=n\}, \ \ \ V_n=\bigcup_{m=0}^n W_m,  \ \ \ L_n=\{\langle x,y\rangle\in L: x,y\in V_n\}
\end{equation*}
and denote
$$
S_k(x)=\{y\in W_{n+1} :  d(x,y)=1 \}, \ \ x\in W_n, $$
the set  of {\it direct successors} of $x$ on the Cayley tree of order $k$.

We consider models where the spin takes values in the set $[0,1]$, and spins are assigned to the vertices
of the tree. For $A\subset V$ a {\it configuration} $\s_A$ on $A$ is an arbitrary function $\s_A:A\to
[0,1]$. Denote $\Omega_A=[0,1]^A$ the set of all configurations on $A$. We denote $\Omega=[0,1]^V$.

The {\it Hamiltonian} of the model is :
\begin{equation}\label{e1}
 H(\sigma)=-J\sum_{\langle x,y\rangle\in L}
\de_{\sigma(x)\sigma(y)},
\end{equation}
where $J \in R\setminus \{0\}$
and $\de: (u,v)\in [0,1]^2\to \de_{uv}\in R$ is a given bounded,
measurable function.

Let $\lambda$ be the Lebesgue measure on $[0,1]$.  On the set of all
configurations on $A$ the a priori measure $\lambda_A$ is introduced as
the $|A|$-fold product of the measure $\lambda$, where
$|A|$ denotes the cardinality of $A$.

 We consider a standard
sigma-algebra ${\mathcal B}$ of subsets of $\Omega=[0,1]^V$ generated
by the measurable cylinder subsets.

 A probability measure $\mu$ on $(\Omega,{\mathcal B})$
is called a {\it Gibbs measure} (corresponding to the Hamiltonian $H$) if it satisfies the DLR equation, namely for any
$n=1,2,\ldots$ and $\sigma_n\in\Omega_{V_n}$:
$$\mu\left(\left\{\sigma\in\Omega :\;
\sigma\big|_{V_n}=\sigma_n\right\}\right)= \int_{\Omega}\mu ({\rm
d}\omega)\nu^{V_n}_{\omega|_{W_{n+1}}} (\sigma_n),$$
where
$$ \nu^{V_n}_{\omega|_{W_{n+1}}}(\sigma_n)=\frac{1}{Z_n\left(
\omega\big|_{W_{n+1}}\right)}\exp\;\left(-\beta H
\left(\sigma_n\,||\,\omega\big|_{W_{n+1}}\right)\right),
$$
and $\beta={1\over T}$, $T>0 $ is temperature.
Furthermore, $\sigma\big|_{V_n}$ and $\omega\big|_{W_{n+1}}$ denote the
restrictions of configurations $\sigma,\omega\in\Omega$ to $V_n$ and $W_{n+1}$, respectively. Next,
$\sigma_n:\;x\in V_n\mapsto \sigma_n(x)$ is a configuration in $V_n$ and
$$H\left(\sigma_n\,||\,\omega\big|_{W_{n+1}}\right)=-J\sum_{\langle x,y\rangle\in L_n}\de_{\sigma_n(x)\sigma_n(y)}-J\sum_{\langle x,y\rangle:\;x\in V_n, y\in W_{n+1}}
\de_{\sigma_n(x)\omega (y)}.$$
Finally,
$$Z_n\left(\omega\big|_{W_{n+1}}\right)=
\int_{\Omega_{V_n}} \exp\;\left(-\beta H \left({\widetilde\sigma}_n\,||\,\omega
\big|_{W_{n+1}}\right)\right)\lambda_{V_n}(d{\widetilde\sigma}_n).$$

 The {\it main problem} for a given Hamiltonian is to describe all its Gibbs measures.
 See \cite{Ge} for a general definition of Gibbs measure, motivations why these measures are
important and the theory of such measures.

 This main problem is not completely solved even for simple Ising or Potts models
 on a Cayley tree with a finite set of spin values.
 Mainly this problem is solved for the class of splitting Gibbs measures (SGMs) \cite{R} (Markov chains \cite{Ge}),
 which are limiting Gibbs measures constructed by Kolmogorov's
 extension theorem of the following finite-dimensional distributions:
 given $n=1,2,\ldots$, consider the probability distribution $\mu_n$ on $\Omega_{V_n}$ defined by
\begin{equation}\label{uy2}
\mu_n(\sigma_n)=Z_n^{-1}\exp\left(-\beta H(\sigma_n)
+\sum_{x\in W_n}h_{\sigma(x),x}\right),
\end{equation}
where $h:\;x\in V\mapsto h_x=(h_{t,x}, t\in [0,1]) \in R^{[0,1]}$ be mapping of $x\in V\setminus
\{x^0\}$. Here $Z_n$ is the corresponding partition function.
The probability distributions $\mu_n$ are compatible if for any $n\geq 1$ and
$\sigma_{n-1}\in\Omega_{V_{n-1}}$:
\begin{equation}\label{uy4}
\int_{\Omega_{W_n}}\mu_n(\sigma_{n-1}\vee\omega_n)\lambda_{W_n}(d(\omega_n))=
\mu_{n-1}(\sigma_{n-1}).
\end{equation}
 Here
$\sigma_{n-1}\vee\omega_n\in\Omega_{V_n}$ is the concatenation of
$\sigma_{n-1}$ and $\omega_n$.

To see that a SGM satisfies the DLR equation, we consider any finite volume $D$
and note that for any finite $n$ which is sufficiently large we have
\begin{equation}
\mu_n\left(\left\{\omega\in\Omega:\;
\omega\big|_{D}=\sigma_D\right\}\right)= \int_{\Omega}\mu_n ({\rm d}\varphi)\nu^{D}_\varphi
(\sigma_D),
\end{equation}
which follows from the compatibility property of the finite-volume Gibbs measures.

 For the model (\ref{e1}) on the Cayley tree, in \cite{re}, the problem of describing the SGMs was reduced to the description of the solutions of the following
integral equation
\begin{equation}\label{e5}
 f(t,x)=\prod_{y\in S_k(x)}{\int_0^1K(t,u)f(u,y)du \over \int_0^1K(0,u)f(u,y)du}.
 \end{equation}
Here, $K(t,u)=\exp(J\beta\de_{tu})$ and the unknown function is $f(t,x)>0, \ x\in V, t\in [0,1]$ and
$du=\lambda(du)$ is the Lebesgue measure.

If a solution $f(t,x)$ is given then the corresponding SGM $\mu$
on $\Omega$ is such that, for any $n$ and
$\sigma_n\in\Omega_{V_n}$,
$$\mu \left(\left\{\sigma
\Big|_{V_n}=\sigma_n\right\}\right)=Z_n^{-1}\exp\left(-\beta H(\sigma_n)
+\sum_{x\in W_n}\ln f(\sigma(x),x)\right).
$$

A splitting Gibbs measure is called {\it translation-invariant measure} if it corresponds to a solution
$f(t,x)$ which does not depend on $x\in V$, i.e., $f(t,x)=f(t)$ for any $x\in V$.

In this paper we only deal with splitting Gibbs measures, therefore we omit the word ``splitting" in the following text.

Note that the analysis of solutions to (\ref{e5}) is
not easy. This difficulty depends on the given function $\xi$ (i.e. on $K(t,u)>0$).

Let us list known results about solutions of (\ref{e5}) and the Gibbs measures corresponding to them:

 In \cite{re} for $k=1$ it was shown that the integral equation has a
unique solution. In case $k\geq 2$ some models (with the set
$[0,1]$ of spin values) which have a unique Gibbs
measure are constructed.

In \cite{EHR12} several models with nearest-neighbor interactions and
with the set $[0, 1]$ of spin values, on a Cayley tree of order $k\geq 2$ are constructed. It is proved that each of the
constructed models has at least two translational-invariant Gibbs measures, i.e. the equation (\ref{e5}) has at least two solutions $f(t,x)$ which are independent of the vertices $x$ of the tree.

In \cite{EHR13} a condition on $K(t,u)$ is found under which the corresponding integral
equation (\ref{e5}) has a unique solution independent of $x$ (i.e. uniqueness of the
translation-invariant Gibbs measure).

In \cite{ERB} for a specifically chosen $K(t,u)$ it is shown that under certain conditions on the parameters of
the model there are at least two translation-invariant Gibbs measures (i.e., there are phase transitions).

In \cite{JKB} the authors considered a model on a Cayley tree of order $k\geq 2$,
where the function $\xi$ depends on a parameter $\theta\in [0,1)$. It is show that for $\theta \in [0, {5\over 3k}]$ the model has a unique translation-invariant Gibbs measure. If $\theta\in ({5\over3k}, 1)$ there is a phase transition, in particular there are three translation-invariant Gibbs measures.

 Paper  \cite{RH} deals with a class of Gibbs measures which are periodic and also a Markov chain.
 It is shown that the period must be either 1 or 2. If $k=1$ or the interaction is weak enough,
 the period is 1, i.e., every such Gibbs measure is translation-invariant. For $k=2$,
 a class of interactions is constructed admitting at least two Gibbs measures with period 2.
 For $k$ sufficiently large, an interaction is given admitting at least four Gibbs measures with period two.

 In \cite{ENH} the translation-invariant Gibbs measures for a function $K(t,u)$ are investigated by properties
 of positive fixed points of quadratic operators. Under some conditions it is shown that there are two and three positive fixed points.

We note that in the above mentioned papers existence of a Gibbs measure
is proved by directly solving the equation (\ref{e5})
for concrete chosen $K(t,u)$.

In this paper our {\it aim} is slightly different:
we mainly will construct new solutions of (\ref{e5}) by its known solutions.
To do this we will adapt to our models the construction methods which were
used for models with a {\it finite} set of spin values (see \cite{ART}, \cite{BG}, \cite{Ge}, \cite{R}).

\section{Non-translation-invariant Gibbs measures}

\subsection{ART construction}
In \cite{ART} for the Ising model (with the set $\{-1,1\}$ of spin values)
the authors constructed a class of new Gibbs measures by extending the known Gibbs measures
defined on a Cayley tree of order $k_0$ to a Cayley tree of higher order $k>k_0$.
Their construction is called ART-construction \cite{GRRR}.

In this subsection we adapt the ART-construction to models with an uncountable set of spin values.

For a given $H(\sigma)$ of the model (\ref{e1}), denote by $\mathcal G_k(H)$ the set of
{\it all} splitting Gibbs measures on the Cayley tree of order $k\geq 2$.

By $|M|$ we denote the number of elements of a set $M$.

The main result of this subsection is the following

\begin{thm}\label{t1} Take $k_0,k\in \{2,3,\dots\}$ such that $k>k_0$. If $|\mathcal G_{k_0}(H)|\geq 2$
and $K(t,u)$ is such that
\begin{equation}\label{K0}
\int_0^1(K(t,u)-K(0,u))du=0, \ \ \forall t\in [0,1]
\end{equation}
then for each $\mu\in \mathcal G_{k_0}(H)$ there is a $\nu=\nu(\mu)\in \mathcal G_k(H)$.
\end{thm}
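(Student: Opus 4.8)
The plan is to take a known solution $f(t,x)$ of the integral equation \eqref{e5} on the Cayley tree $\Gamma^{k_0}$ of order $k_0$ and to build from it a solution $\widehat f(t,x)$ of \eqref{e5} on $\Gamma^k$ with $k > k_0$. The idea borrowed from the ART-construction is to let the new solution ``genuinely use'' only $k_0$ of the $k$ successors at each vertex, and to make the remaining $k-k_0$ successors contribute a factor equal to $1$. More precisely, I would partition each $S_k(x)$ into a subset $S'(x)$ of size $k_0$ and its complement $S''(x)$ of size $k-k_0$ (this can be done consistently throughout the tree, e.g. by fixing a labelling of successors). On the subtree $\Gamma'$ spanned by following only the edges into $S'(\cdot)$ — which is isomorphic to $\Gamma^{k_0}$ — set $\widehat f := f$; on the vertices that first become reachable through an $S''$-edge, set $\widehat f(t,x) \equiv 1$ (constant in $t$), and propagate this: once a vertex carries the constant function $1$, so do all of its successors.

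The verification then splits into two cases according to the type of vertex. For a vertex $x$ on the ``$\Gamma^{k_0}$-part'', the product over $S_k(x)$ in \eqref{e5} factors as the product over the $k_0$ successors in $S'(x)$ times the product over the $k-k_0$ successors in $S''(x)$. Each factor coming from a $y \in S''(x)$ is
$$
\frac{\int_0^1 K(t,u)\,\widehat f(u,y)\,du}{\int_0^1 K(0,u)\,\widehat f(u,y)\,du} = \frac{\int_0^1 K(t,u)\,du}{\int_0^1 K(0,u)\,du},
$$
which equals $1$ for every $t$ precisely because of hypothesis \eqref{K0}. Hence the $S''$-factors disappear and the defining equation at $x$ reduces to exactly \eqref{e5} on $\Gamma^{k_0}$ with the $k_0$ successors in $S'(x)$ — which holds since $f$ solves it. For a vertex $x$ on the ``constant part'', every successor $y$ again satisfies $\widehat f(\cdot,y)\equiv 1$, so by the same computation each factor is $1$ and the product is $1$, matching $\widehat f(t,x)\equiv 1$. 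Thus $\widehat f$ solves \eqref{e5} on $\Gamma^k$, and the associated SGM $\nu = \nu(\mu) \in \mathcal G_k(H)$ is obtained from $\widehat f$ via the correspondence recalled before the statement. Distinct $\mu$'s give distinct $\widehat f$'s on the $\Gamma^{k_0}$-subtree, hence distinct $\nu$'s, which also shows $|\mathcal G_k(H)|\geq 2$; the hypothesis $|\mathcal G_{k_0}(H)|\geq 2$ guarantees there is at least one non-trivial $\mu$ to feed into the construction.

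The main point requiring care — and the only real obstacle — is checking that the constant function $1$ really is ``absorbing'': one must confirm that hypothesis \eqref{K0} is exactly what makes a factor with $\widehat f(\cdot,y)\equiv 1$ collapse to $1$ for all $t$, and that no positivity or integrability issue arises (here $K>0$ and boundedness of $\xi$ make all the integrals finite and positive, so $\widehat f$ stays strictly positive). A secondary bookkeeping point is to fix the partition $S_k(x)=S'(x)\sqcup S''(x)$ once and for all so that the ``$\Gamma^{k_0}$-part'' of $\Gamma^k$ is a well-defined subtree isomorphic to $\Gamma^{k_0}$; after that the two-case verification above is routine.
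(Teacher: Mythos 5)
Your proposal is correct and follows essentially the same route as the paper: the paper also embeds $V^{k_0}$ into $V^k$ (your choice of $S'(x)\sqcup S''(x)$), sets the new function equal to $f_\mu$ on the embedded subtree and to the constant $1$ elsewhere, and uses hypothesis \eqref{K0} to make the extra factors in the product over $S_k(x)$ collapse to $1$. The two-case verification and the injectivity remark at the end match the paper's argument.
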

\begin{proof}  By our assumptions we have that ${\mathcal G}_{k_0}(H)$ contains at last two elements.
Condition (\ref{K0}) guarantees that $f(t,x)\equiv 1$ is a solution of the equation (\ref{e5}) for {\it any} $k\geq 2$.
Denote by $\mu_1$ the Gibbs measure which corresponds to this solution.

Now for any $\mu\in {\mathcal G}_{k_0}(H)\setminus \{\mu_1\}$,  we shall construct a
Gibbs measure $\nu=\nu(\mu)$ which is a measure on the Cayley tree of order $k>k_0$.
As mentioned in the previous section to each measure $\mu\in {\mathcal G}_{k_0}(H)$ corresponds a
unique function $f(t,x)=f_\mu(t,x)$ which satisfies (\ref{e5}) on $\Gamma^{k_0}$.
Construct a function $g(t,x)\equiv g_\mu(t,x)$ on $\Gamma^k$ as follows.
Let $V^k$ be the set of all vertices of the Cayley tree $\Gamma^k$.
Since $k_0<k$ one can consider $V^{k_0}$ as a subset of $V^k$.
Define the following function
\begin{equation}\label{1}
g(t,x)=\left\{\begin{array}{ll}
f_\mu(t,x), \ \ \mbox{if} \ \ x\in V^{k_0}\\[2mm]
1, \ \ \mbox{if} \ \ x\in V^k\setminus V^{k_0}.\\
\end{array}\right.
\end{equation}

Now we shall check that (\ref{1}) satisfies (\ref{e5}) on $\Gamma^k$.

Let $x\in V^{k_0}\subset V^k$. We have
$$ g(t,x)=\prod_{y\in S_k(x)}{\int_0^1K(t,u)g(u,y)du \over \int_0^1K(0,u)g(u,y)du}$$
$$=\prod_{y\in S_k(x)\cap V^{k_0}}{\int_0^1K(t,u)f_\mu(u,y)du \over \int_0^1K(0,u)f_\mu(u,y)du}
\prod_{y\in S_k(x)\cap (V^k\setminus V^{k_0})}{\int_0^1K(t,u)du \over \int_0^1K(0,u)du}$$
For the first product we use $S_k(x)\cap V^{k_0}=S_{k_0}(x)$ and for the second product we use the condition (\ref{K0}) then we get
$$ g(t,x) =\prod_{y\in S_{k_0}(x)}{\int_0^1K(t,u)f_\mu(u,y)du \over \int_0^1K(0,u)f_\mu(u,y)du}=f_\mu(t,x).$$

If $x\in V^k\setminus V^{k_0}$ then $S_k(x)\subset V^k\setminus V^{k_0}$. Therefore $g(u,y)=1$, for any $y\in S_k(x)$ and we have
$$g(t,x)=\prod_{y\in S_k(x)}{\int_0^1K(t,u)du \over \int_0^1K(0,u)du}=1.$$
 Thus $g(t,x), x\in V^k$ satisfies
the integral equation (\ref{e5}) and we denote by $\nu=\nu(\mu)$ the Gibbs measure which
corresponds to $g(t,u)$. By the construction one can see that $\nu(\mu_1)\ne \nu(\mu_2)$ if $\mu_1\ne \mu_2$ and the measure $\nu$ is
not translation-invariant.
\end{proof}

Now let us give some examples where the conditions of Theorem \ref{t1} are satisfied:

\begin{ex} Let  $k=2$. In the model (\ref{e1}) take
$$\xi_{tu}=\frac{1}{\beta J}\ln\left(1+\frac{14}{15}\cdot\sqrt[5]{4\left(t-\frac{1}{2}\right)\left(u-\frac{1}{2}\right)}\right), \,\ t,u\in[0,1].$$
Then, for the kernel $K(t,u)$ of (\ref{e5}) we have
$$K(t,u)=1+\frac{14}{15}\cdot\sqrt[5]{4\left(t-\frac{1}{2}\right)\left(u-\frac{1}{2}\right)}.$$
In \cite{EHR12} it was shown that this model has at least two Gibbs measures and
the condition (\ref{K0}) is satisfied, i.e., $f(t,x)\equiv 1$ is a solution to (\ref{e5}).
\end{ex}
\begin{ex}
Consider the case $k=3$ and
$$K(t,u)=1+\frac{1}{2}\sqrt[7]{4\left(t-\frac{1}{2}\right)\left(u-\frac{1}{2}\right)}.$$
This model also satisfies the conditions of Theorem \ref{t1} and has at least two Gibbs measures (see \cite{EHR12}).
\end{ex}
For other examples satisfying conditions of Theorem \ref{t1} see \cite{EHR12}, \cite{ERB}, \cite{RH}.

\subsection{Bleher-Ganikhodjaev construction}
Here we will adapt the Bleher-Ganikhodjaev construction of \cite{BG} for the model (\ref{e1}).

If an arbitrary edge $\langle x^0, x^1\rangle=l\in L$ is deleted from the Cayley tree $\Gamma^k$, it splits into two components -- two semi-infinite (half) trees $\Gamma^k_0$ and $\Gamma^k_1$.
Consider the half tree $\Gamma^k_0$, and denote by $V^0$ the set of its vertices. Namely the root $x^0$
has $k$ nearest neighbors.

Denoting $h(t,x)=\ln f(t,x)$ write the equation (\ref{e5}) as
\begin{equation}\label{eh}
 h(t,x)=\sum_{y\in S_k(x)}\ln{\int_0^1K(t,u)e^{h(u,y)}du \over
 \int_0^1K(0,u)e^{h(u,y)}du}.
 \end{equation}

On the set $C[0,1]$ of continuous functions define the following non-linear operator
\begin{equation}\label{Af}
Af(t)=\ln{\int_0^1K(t,u)e^{f(u)}du \over
 \int_0^1K(0,u)e^{f(u)}du},
\end{equation}
 where $K(t,u)>0$.
\begin{con}\label{c1} Assume $K(t,u)>0$ is continuous on $[0,1]^2$, i.e., $K(\cdot,\cdot)\in C^+[0,1]^2$, and there is $\alpha\equiv \alpha_K \in [0, 1)$ such that
$$|Af(t)-Ag(t)|\leq \alpha |f(t)-g(t)|, \ \ \forall f,g\in C[0,1], \forall t\in [0,1].$$
\end{con}
\begin{con}\label{c2} Assume there are at least two translation-invariant solutions, say $h(t,x)\equiv h(t)\in C[0,1]$ and $h(t,x)\equiv \eta(t)\in C[0,1]$, to the equation (\ref{eh}), i.e., they are fixed points for the operator $kA$:
\begin{equation}\label{ehh}
 h(t)=kAh(t)=k\ln{\int_0^1K(t,u)e^{h(u)}du \over
 \int_0^1K(0,u)e^{h(u)}du}, \ \ \ \
 \eta(t)=kA\eta(t).
 \end{equation}
 \end{con}
 \begin{rk} If Condition \ref{c1} is satisfied then to satisfy the Condition \ref{c2}
 it is necessary that ${1\over k}\leq\alpha<1$.
  \end{rk}

We use $h(t)$ and $\eta(t)$ to construct an uncountable set of new solutions of (\ref{eh}).

Consider an infinite path $\pi=\{x^0=x_0<x_1<\dots\}$  (the notation $x<y$ meaning that  paths from the root to $y$ go through $x$).
Associate  to this  path
a collection $h^\pi=\{h^\pi_{t,x}: x\in V^0, t\in [0,1]\}$ given by
\begin{equation}\label{b3.1}
h_{t,x}^\pi=\left\{\begin{array}{ll}
h(t), \ \ \mbox{if} \ \ x\prec x_n, \, x\in W_n,\\[2mm]
\eta(t), \ \ \mbox{if} \ \ x_n\prec x, \, x\in W_n,\\[2mm]
h_{t,x_n}, \ \ \mbox{if} \ \ x=x_n.
\end{array}\right.
\end{equation}
$n=1,2,\dots$ where $x\prec x_n$ (resp. $x_n\prec x$) means that $x$ is on the left (resp. right) from the path $\pi$
and $h_{t,x_n}$ are arbitrary numbers, some conditions on these numbers will be given below.

\begin{thm}\label{t2} If Conditions \ref{c1} and \ref{c2} are satisfied, then for any infinite path $\pi$, there exists a unique set of numbers
$h^\pi= \{h^\pi_{t,x}\}$ satisfying equations (\ref{eh}) and (\ref{b3.1}).
\end{thm}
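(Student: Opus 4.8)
The shape (\ref{b3.1}) already fixes $h^\pi_{t,x}$ at every vertex $x$ off the path $\pi$, so the only quantities left to determine are the functions $\psi_n:=h^\pi_{\cdot,x_n}\in C[0,1]$, $n\ge 0$. The plan is to (i) check that with the prescription (\ref{b3.1}) the equation (\ref{eh}) holds automatically at every vertex \emph{off} $\pi$, (ii) rewrite the equations at the $x_n$ as a single affine backward recursion for $(\psi_n)_{n\ge0}$, and (iii) solve that recursion by the Banach fixed point theorem. For (i), recall that in the half tree $\Gamma^k_0$ every vertex (the root $x^0$ included) has exactly $k$ direct successors. If $x\in W_n$ lies to the left of $\pi$, i.e. $x\prec x_n$, then every $y\in S_k(x)$ again lies to the left of $\pi$, so by (\ref{b3.1}) $h^\pi_{u,y}=h(u)$ for all such $y$, and the right-hand side of (\ref{eh}) at $x$ equals $\sum_{y\in S_k(x)}Ah(t)=kAh(t)=h(t)=h^\pi_{t,x}$ by the first equation of (\ref{ehh}); the symmetric computation with $\eta$ in place of $h$ handles the vertices with $x_n\prec x$. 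Hence (\ref{eh}) can fail only at the path vertices $x_n$.

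For (ii), fix $n$ and look at $S_k(x_n)$: it consists of $x_{n+1}$ together with $k-1$ further successors, of which say $p_n$ lie (with their whole subtrees) to the left of $\pi$ and $q_n:=k-1-p_n$ to the right. By (\ref{b3.1}) these carry the constant functions $h$ and $\eta$ respectively, while $x_{n+1}$ carries the unknown $\psi_{n+1}$. Using the operator $A$ of (\ref{Af}) together with the identities $Ah=\frac1k h$ and $A\eta=\frac1k\eta$ — which are exactly (\ref{ehh}) divided by $k$ — the equation (\ref{eh}) at $x_n$ becomes
\[
\psi_n(t)=A\psi_{n+1}(t)+\frac{p_n}{k}\,h(t)+\frac{q_n}{k}\,\eta(t),\qquad n=0,1,2,\dots,
\]
an affine backward recursion whose coefficients $(p_n,q_n)$ depend on $\pi$ but always satisfy $p_n,q_n\ge 0$ and $p_n+q_n=k-1$.

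For (iii), I would work in the Banach space $E$ of bounded sequences $\psi=(\psi_n)_{n\ge0}$ with $\psi_n\in C[0,1]$, normed by $\|\psi\|_E=\sup_{n\ge0}\max_{t\in[0,1]}|\psi_n(t)|$, and define $\mathcal{F}:E\to E$ by $(\mathcal{F}\psi)_n=A\psi_{n+1}+\frac{p_n}{k}h+\frac{q_n}{k}\eta$. Continuity and positivity of $K$ on $[0,1]^2$ make $A$ map $C[0,1]$ into itself with $\|A\varphi\|_\infty\le\alpha\|\varphi\|_\infty+\|A0\|_\infty$ and $\|A0\|_\infty<\infty$, and $p_n/k,q_n/k\le 1$, so $\mathcal{F}$ indeed maps $E$ into $E$. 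By Condition \ref{c1}, for all $\psi,\varphi\in E$ and all $n,t$,
\[
\bigl|(\mathcal{F}\psi)_n(t)-(\mathcal{F}\varphi)_n(t)\bigr|=\bigl|A\psi_{n+1}(t)-A\varphi_{n+1}(t)\bigr|\le\alpha\,|\psi_{n+1}(t)-\varphi_{n+1}(t)|\le\alpha\,\|\psi-\varphi\|_E,
\]
so $\mathcal{F}$ is an $\alpha$-contraction with $\alpha<1$ (by the Remark following Condition \ref{c2}). Its unique fixed point in $E$, read back as $h^\pi_{t,x_n}:=\psi_n(t)$ and combined with (\ref{b3.1}) off $\pi$, yields by (i) a solution of (\ref{eh}), and it is the only one with $\sup_{n,t}|h^\pi_{t,x_n}|<\infty$.

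The computations in (i)–(ii) are routine bookkeeping; the step that needs care is the reduction itself — verifying that imposing the boundary shape (\ref{b3.1}) introduces no hidden compatibility conditions at the vertices adjacent to $\pi$, so that the whole system on $V^0$ genuinely collapses to the one-index recursion above. After that, everything rests on the two Conditions: \ref{c1} makes $A$, hence $\mathcal{F}$, a contraction with \emph{exactly} the same constant $\alpha$ because $A$ occurs only once on the right, while \ref{c2} supplies the constants $\frac1k h,\frac1k\eta$ that keep the recursion affine in $\psi$. It is also worth stating explicitly the class in which uniqueness is asserted, namely families $(h^\pi_{t,x_n})_n$ bounded uniformly in $n$ and $t$, which is the natural setting here.
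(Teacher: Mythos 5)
Your proposal is correct, and it reaches the result by a genuinely different formalization of the same underlying mechanism. The paper proceeds by finite-volume approximation: it puts an arbitrary boundary value $h^{(n)}_{t,x_n}$ at level $n$, propagates it down the path by the recursion, shows via the contraction estimate $|\tilde h_{t,x_{l-1}}-\bar h_{t,x_{l-1}}|\leq\alpha|\tilde h_{t,x_l}-\bar h_{t,x_l}|$ (iterated to give a factor $\alpha^{n-m}$) that the values at a fixed $x_m$ form a Cauchy sequence independent of the boundary data, and passes to the limit. You instead make the structure of the problem explicit by writing the equation at the path vertices as the affine backward recursion $\psi_n=A\psi_{n+1}+\frac{p_n}{k}h+\frac{q_n}{k}\eta$ and solving it in one stroke by Banach's fixed point theorem on the sequence space $E$; the off-path verification in your step (i) is word-for-word the paper's computation leading to (\ref{b3.1a}), and your contraction estimate for $\mathcal F$ is the paper's (\ref{b4.4})--(\ref{b4.5}) in disguise (the affine terms cancel in the difference). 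What your route buys is a cleaner statement of where existence and uniqueness come from and an explicit formula-free description of the limit as a fixed point; what the paper's route buys is that it never needs to introduce the sequence space and stays closer to the physical picture of boundary conditions at level $n$. Two small points: the bound $\alpha<1$ is already part of Condition \ref{c1} (the Remark after Condition \ref{c2} only says $\alpha\geq 1/k$ is additionally necessary), so you need not invoke the Remark for the contraction; and your restriction of uniqueness to uniformly bounded families is in fact no restriction at all, since by (\ref{h1}) every solution of (\ref{eh}) satisfies $h^{\min}(t)\leq h(t,x)\leq h^{\max}(t)$ with $h^{\min},h^{\max}$ bounded (as $K$ is continuous and positive on the compact $[0,1]^2$), so uniqueness in $E$ is genuine uniqueness --- it would strengthen your write-up to say this rather than leave the boundedness hypothesis as a caveat.
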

\begin{proof} On $W_n$, we define the set
\begin{equation}\label{b4.1}
h_{t,x}^{(n)}=\left\{\begin{array}{lll}
h(t), \ \ \mbox{if} \ \ x\prec x_n, \, x\in W_n,\\[2mm]
\eta(t), \ \ \mbox{if} \ \ x_n\prec x, \, x\in W_n,\\[2mm]
h^{(n)}_{t,x},   \ \ \mbox{if} \ \ x=x_n,
\end{array}\right.
\end{equation}
where $h^{(n)}_{t,x_n}\in(h^{\min}(t), h^{\max}(t))$, $\forall t\in [0,1]$ is an arbitrary number.

 We extend the definition of $h^{(n)}_{t,x}$
for all $x\in V_n=\cup_{m=0}^nW_m$ using recursion equations (\ref{eh})
and prove that the limit
\begin{equation}\label{b4.2}
h_{t,x} = \lim_{n\to\infty} h_{t,x}^{(n)}
\end{equation}
exists for every fixed $x\in V^0$ and is independent of the choice of $h^{(n)}_{t,x}$ for $x = x_n$.

If $x\in W_{n-1}$ and $x\prec x_{n-1}$, then for any $y\in S_k(x)$ we have $y\prec x_{n}$, therefore
$$h_{t,x}^{(n)}=\sum_{y\in S_k(x)} \ln{\int_0^1K(t,u)e^{h^{(n)}_{u,y}}du \over
 \int_0^1K(0,u)e^{h^{(n)}_{u,y}}du}=kAh(t)=h(t).$$
Similarly, for $x\in W_{n-1}$ and $x_{n-1}\prec x$, we get $h_{t,x}^{(n)}=\eta(t)$.
Consequently, for any $x\in W_m$, $m\leq n$ we have
\begin{equation}\label{b3.1a}
h_{t,x}^{(n)}=\left\{\begin{array}{ll}
h(t), \ \ \mbox{if} \ \ x\prec x_m, \, x\in W_m,\\[2mm]
\eta(t), \ \ \mbox{if} \ \ x_m\prec x, \, x\in W_m.
\end{array}\right.
\end{equation}
This implies that limit (\ref{b4.2}) exists for $x\in W_m$, $x\ne x_m$ and
$$h_{t,x}=\left\{\begin{array}{ll}
h(t), \ \ \mbox{if} \ \ x\prec x_m, \, x\in W_m,\\[2mm]
\eta(t), \ \ \mbox{if} \ \ x_m\prec x, \, x\in W_m.
\end{array}\right.
$$
Therefore, we only need to
establish that limit (\ref{b4.2}) exists for $x = x_m$.
For $1\leq l\leq n$ we have
\begin{equation}\label{b4.3}
h_{t,x_{l-1}}^{(n)}=\sum_{y\in S_k(x_{l-1})}\ln{\int_0^1K(t,u)e^{h^{(n)}_{u,y}}du \over
 \int_0^1K(0,u)e^{h^{(n)}_{u,y}}du}=\sum_{y\in S_k(x_{l-1})}Ah^{(n)}_{t,y}.
\end{equation}
Consider two sets $\{\bar h_{t,x}^{(n)}, x\in V_n\}$ and $\{\tilde h_{t,x}^{(n)}, x\in V_n\}$ which correspond to two values
$\bar h^{(n)}_{t,x}$ and $\tilde h^{(n)}_{t,x}$ for $x = x_n$, in (\ref{b4.1}), then since $\tilde h_{t,y}^{(n)}=\bar h_{t,y}^{(n)}$, $\forall t\in [0,1]$ and for any $y\ne x_l$, $y\in W_l$, from (\ref{b4.3}) we get
\begin{equation}\label{b4.4}
\tilde h_{t,x_{l-1}}^{(n)}-\bar h_{t,x_{l-1}}^{(n)}=A\tilde h^{(n)}_{t,x_{l}}-A\bar h^{(n)}_{t,x_{l}}.
\end{equation}
Consequently, by Condition \ref{c1} we get
\begin{equation}\label{b4.5}
\left|\tilde h_{t,x_{l-1}}^{(n)}-\bar h_{t,x_{l-1}}^{(n)}\right|\leq\alpha\left|\tilde h_{t,x_l}^{(n)}-\bar h_{t,x_l}^{(n)}\right|.
\end{equation}
Iterating this inequality we obtain
\begin{equation}\label{b4.6}
\left|\tilde h_{t,x_m}^{(n)}-\bar h_{t,x_m}^{(n)}\right|\leq\alpha^{n-m}\left|\tilde h_{t,x_n}^{(n)}-\bar h_{t,x_n}^{(n)}\right|.
\end{equation}

For arbitrary $N, M > n$, we now consider the sets $\{h_{t,x}^{(N)}, x\in V_N\}$ and $\{h_{t,x}^{(M)}, x\in V_M\}$,
$t\in [0,1]$ determined by initial conditions of the form (\ref{b4.1}) for $x\in W_N$ and
$x\in W_M$ respectively and by recursion equations (\ref{eh}). We set $\bar h_{t,x_n}^{(n)}=h_{t,x_n}^{(N)}$, $\tilde h_{t,x_n}^{(n)}=h_{t,x_n}^{(M)}$.
Then inequalities (\ref{b4.6}) imply
$$
\left|h_{t,x_m}^{(N)}-h_{t,x_m}^{(M)}\right|\leq\alpha^{n-m}\left|h_{t,x_n}^{(N)}-h_{t,x_n}^{(M)}\right|\leq 2h^{\max}_0\alpha^{n-m}.
$$

This estimate implies that the sequence $h^{(n)}_{t,x_m}$ satisfies the Cauchy criterion as $n\to\infty$ for a fixed
$m$ and a fixed $t\in [0,1]$; therefore, limit (\ref{b4.2}) exists and is independent of the choice of $h^{(n)}_{t,x_n}$ in (\ref{b4.1}). Because,
by construction, the sets $\{h^{(n)}_{t,x}\}$ satisfy equation (\ref{eh}) before taking the limit, so does $\{h_{t,x}\}$. The uniqueness of $\{h_{t,x}\}$ obviously follows from estimate (\ref{b4.6}).
\end{proof}
A real number $r=r(\pi)$, $0\leq r\leq 1$ can be assigned to the infinite path (see \cite{BG}) and by Theorem \ref{t2} the set $h^{\pi(r)}$ satisfying (\ref{eh}) is uniquely defined.
By the construction of $h^{\pi(r)}$ it is obvious that they are distinct for different
$r\in [0,1]$. We denote by $\nu_r$ the Gibbs measure corresponding to $h^{\pi(r)}$, $r\in [0,1]$.
 One thus obtains uncountable many  Gibbs measures, i.e., we proved the following
\begin{thm}\label{t3} If Conditions \ref{c1} and \ref{c2} are satisfied then for any $r\in [0,1]$ there exists a non-translation-invariant Gibbs measure $\nu_r$ and $\nu_r\ne \nu_l$ if $r\ne l$.
\end{thm}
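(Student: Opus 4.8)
The plan is to convert the family of solutions produced by Theorem~\ref{t2} into a family of splitting Gibbs measures indexed by $[0,1]$, and then to verify the two claims: each member is non-translation-invariant, and different indices give different measures.

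\textbf{Step 1 (encoding paths; definition of $\nu_r$).} Fix for every vertex $x$ the ordering $z_1(x)\prec z_2(x)\prec\dots\prec z_k(x)$ of $S_k(x)$ that underlies the relations $\prec$ used in (\ref{b3.1}). An infinite path $\pi=\{x^0=x_0<x_1<\dots\}$ from the root is then the same datum as a digit sequence $(d_n)_{n\ge1}$ with $d_n\in\{0,\dots,k-1\}$, via $x_n=z_{d_n+1}(x_{n-1})$, and one sets $r(\pi)=\sum_{n\ge1}d_nk^{-n}\in[0,1]$. Conversely, following \cite{BG}, associate to each $r\in[0,1]$ a path $\pi(r)$ with $r(\pi(r))=r$, chosen to be neither constantly leftmost nor constantly rightmost (the two degenerate paths produce precisely the translation-invariant solutions $\eta$ and $h$, because by (\ref{ehh}) a constant assignment is self-consistent, and they are not what we want). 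By Theorem~\ref{t2}, $\pi(r)$ determines a unique collection $h^{\pi(r)}=\{h^{\pi(r)}_{t,x}\}$ solving (\ref{eh}) and (\ref{b3.1}) on the half-tree $\Gamma^k_0$; extended to a solution on all of $\Gamma^k$ in the standard manner of \cite{BG}, and with $f^{\pi(r)}=\exp(h^{\pi(r)})$, this is a positive solution of (\ref{e5}), and we let $\nu_r$ be the associated splitting Gibbs measure.

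\textbf{Step 2 (each $\nu_r$ is non-translation-invariant).} By (\ref{b3.1a})/(\ref{b3.1}), $h^{\pi(r)}$ equals $h(t)$ at every vertex lying to the left of $\pi(r)$ and $\eta(t)$ at every vertex lying to its right, and for a path that is neither constantly leftmost nor constantly rightmost both of these vertex sets are non-empty (indeed infinite). Since Condition~\ref{c2} provides two \emph{distinct} continuous translation-invariant solutions, there is $t_0\in[0,1]$ with $h(t_0)\ne\eta(t_0)$, so $x\mapsto f^{\pi(r)}(t_0,x)$ is non-constant. Now recall that the correspondence ``positive solution of (\ref{e5})''$\;\leftrightarrow\;$``splitting Gibbs measure'' is injective: putting $t=0$ in (\ref{e5}) gives $f(0,x)\equiv1$ for every solution, and if two solutions $f_1,f_2$ induced the same distributions (\ref{uy2}) for all $n$, then $\sum_{x\in W_n}\ln(f_1(\sigma(x),x)/f_2(\sigma(x),x))$ would be $\sigma$-independent for each $n$, forcing $f_1(t,x)=c(x)f_2(t,x)$ on $W_n$ and then $c\equiv1$ by the normalization, i.e.\ $f_1=f_2$. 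Hence if $\nu_r$ were translation-invariant it would also correspond to some $x$-independent solution $\tilde f(t)$, and injectivity would give $f^{\pi(r)}(t,x)=\tilde f(t)$ for all $x$, contradicting the $x$-dependence just established.

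\textbf{Step 3 (distinctness).} For $r\ne l$ pick any base-$k$ expansions; the digit sequences differ, so let $n$ be the first level at which $\pi(r)$ and $\pi(l)$ part at the common vertex $x_{n-1}$, with $x_n^{(r)}=z_a(x_{n-1})\prec z_b(x_{n-1})=x_n^{(l)}$, say $a<b$. If $b\ge a+2$, the vertex $z_{a+1}(x_{n-1})$ lies to the right of $\pi(r)$ (value $\eta(t_0)$) and to the left of $\pi(l)$ (value $h(t_0)$), a disagreement. If $b=a+1$, the whole subtree rooted at $x_n^{(l)}$ lies to the right of $\pi(r)$, while $\pi(l)$ runs through $x_n^{(l)}$; a short check shows that $\pi(r)$ being constantly rightmost beyond $z_a$ and $\pi(l)$ being constantly leftmost beyond $z_b$ would together give $r=l$, so at least one fails, and in either case one exhibits a vertex (inside the subtree below $z_a$, resp.\ below $z_b$) lying on the $\eta$-side of one path and the $h$-side of the other — again a disagreement. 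In all cases $h^{\pi(r)}\ne h^{\pi(l)}$, hence $f^{\pi(r)}\ne f^{\pi(l)}$, and by the injectivity of Step~2, $\nu_r\ne\nu_l$.

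\textbf{Where the difficulty sits.} All the analytic content — convergence of the recursion and uniqueness of $h^{\pi}$ — is already packaged in Theorem~\ref{t2}. The only point here needing care is the path$\leftrightarrow$number dictionary of Step~1, namely fixing how non-unique base-$k$ expansions are resolved so that distinct $r$'s correspond to paths whose left/right vertex patterns demonstrably differ (the case $b=a+1$ above); everything else reduces to the single fact $h\not\equiv\eta$ together with the routine injectivity of the solution-to-measure map.
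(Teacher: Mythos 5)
Your argument is the paper's own, which consists entirely of the two-sentence remark preceding the theorem (assign $r=r(\pi)$ to infinite paths as in \cite{BG}, invoke Theorem \ref{t2}, and declare distinctness ``obvious by the construction''); your Steps 2--3 simply supply the left/right disagreement analysis and the injectivity of the solution-to-measure map that the paper leaves implicit. The one caveat, which you inherit from the paper rather than introduce, is that for $r=0$ (resp.\ $r=1$) the only path with that code is the constantly leftmost (resp.\ rightmost) one, whose unique solution is the translation-invariant $\eta$ (resp.\ $h$), so your requirement that $\pi(r)$ be non-degenerate cannot be met at the endpoints and the non-translation-invariance claim really holds only for $r\in(0,1)$.
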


\subsection{Zachary construction}

In this subsection we adapt Zachary's construction (\cite{Z},\cite[p.251]{Ge}),
which was done for the Ising model, for our model (\ref{e1}) on the Cayley tree.

\begin{con}\label{c3} Assume $K(t,u)>0$ such that the operator A, (\ref{Af}), is invertible.
\end{con}

From (\ref{eh}) we get that
\begin{equation}\label{h1}
h^{\min}(t)\leq h(t,x)\leq h^{\max}(t), \ \ \forall x\in V,
\end{equation}
where
$$h^{\min}(t)=k\ln{\min_{u\in [0,1]}K(t,u)\over \max_{u\in [0,1]}K(0,u)}, \ \ h^{\max}(t)=k\ln{\max_{u\in [0,1]}K(t,u)\over \min_{u\in [0,1]}K(0,u)}.$$

Under Conditions \ref{c2} and \ref{c3} we shall construct a continuum of distinct
functions $h^\zeta_{t,x}$,  which satisfy the functional equation (\ref{eh}), where $\zeta(t)$, is such that
\begin{equation}\label{ze}
h^{\min}(t)<\zeta(t)<h^{\max}(t), \ \ \forall t\in [0,1].
\end{equation}

Take any $\zeta(t)$ with condition (\ref{ze}). Define the sequence
$\zeta_n(t)$, $n\geq 0$ recursively by $\zeta_0(t)=\zeta(t)$,
\begin{equation}\label{tn}
\zeta_n(t)=kA\zeta_{n+1}(t), \ \ n\geq 0.
\end{equation}

Since the operator $A$ is invertible the definition of $\zeta_n(t)$ given by
(\ref{tn}) is unambiguous.

Define the function $h^\zeta_{t,x}$ by $h^\zeta_{t,x}=\zeta_n(t)$ for all $x\in W_n$.

Now we check that this function satisfies the equation (\ref{eh}): for any $x\in V$ there
is $n\geq 0$ such that $x\in W_n$, consequently, $S_k(x)\subset W_{n+1}$ and we have
$$h^\zeta_{t,x}=\sum_{y\in S_k(x)}Ah^\zeta_{t,y}=\sum_{y\in S_k(x)}A\zeta_{n+1}(t)=kA\zeta_{n+1}(t)=\zeta_n(t),$$
i.e., the function $h^\zeta_{t,x}$ satisfies (\ref{eh}) for any $t$ and $\zeta$.

By the construction, distinct functions $\zeta$  define distinct functions $h^\zeta=\{h^\zeta_{t,x}, \, x\in V, t\in [0,1]\}$.
Denote by $\mu^\zeta$ the Gibbs measure which corresponds to the function $h^\zeta$.

Thus we proved the following
\begin{thm}\label{zc}
If Conditions \ref{c2} and \ref{c3} are satisfied, then for any $\zeta$ satisfying (\ref{ze}) there exists a Gibbs measure $\mu^\zeta$ such that $\mu^\zeta\ne \mu^\eta$ if $\zeta\ne \eta$.
\end{thm}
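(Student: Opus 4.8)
The plan is to carry out the recursive scheme sketched just before the statement and to supply the two points that genuinely need argument: well-definedness of the backward iteration $(\zeta_n)_{n\ge0}$, and the passage from distinct admissible functions to distinct measures. Fix $\zeta\in C[0,1]$ satisfying (\ref{ze}). The key preliminary observation is that every solution of (\ref{eh}) vanishes at $t=0$, since $Af(0)=0$ for every $f$; accordingly the natural state space here is $C_0[0,1]=\{f\in C[0,1]:f(0)=0\}$, Condition \ref{c3} is to be read as ``$A$ restricts to a bijection of $C_0[0,1]$ onto itself'', and one may (indeed must) take $\zeta(0)=0$. Then set $\zeta_0=\zeta$ and $\zeta_{n+1}=A^{-1}(k^{-1}\zeta_n)$, so that $\zeta_n=kA\zeta_{n+1}$ for all $n\ge0$; since $A^{-1}$ maps $C_0[0,1]$ into itself the sequence $(\zeta_n)$ stays in $C_0[0,1]$ and is unambiguously defined. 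Put $h^\zeta_{t,x}=\zeta_n(t)$ for $x\in W_n$. Checking (\ref{eh}) is then the short computation already in the text: for $x\in W_n$ one has $S_k(x)\subset W_{n+1}$, so $\sum_{y\in S_k(x)}Ah^\zeta_{t,y}=kA\zeta_{n+1}(t)=\zeta_n(t)=h^\zeta_{t,x}$ (with the standing convention $|S_k(x)|=k$). Thus $h^\zeta$ solves (\ref{eh}), the associated SGM $\mu^\zeta$ exists, and by (\ref{h1}) the iterates stay in $[h^{\min},h^{\max}]$ so nothing degenerates.

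Next I would prove $\zeta\ne\eta\Rightarrow\mu^\zeta\ne\mu^\eta$. If $\zeta\ne\eta$ then $\zeta(t_0)\ne\eta(t_0)$ for some $t_0$, so $h^\zeta$ and $h^\eta$ already disagree at the root $x^0\in W_0$ and hence are distinct as functions on $V\times[0,1]$. To turn this into distinctness of the measures I would record that the correspondence $f\leftrightarrow\mu$ is injective on solutions of (\ref{e5}): comparing the finite-volume densities of the Introduction, two solutions inducing the same measure must differ at each vertex $x$ by a multiplicative constant $a_x$ (the Hamiltonian and partition-function factors cancel out), and since (\ref{e5}) forces $f(0,x)\equiv1$ one gets $a_x=1$, i.e.\ the solutions coincide. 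Hence $\mu^\zeta\ne\mu^\eta$. Finally, that this family is really a continuum uses Condition \ref{c2}: the two distinct translation-invariant solutions it provides both obey (\ref{h1}), so $h^{\min}(t)<h^{\max}(t)$ on a nonempty open set of $t$, leaving room for uncountably many continuous $\zeta$ satisfying (\ref{ze}) and $\zeta(0)=0$.

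The step I expect to be the real obstacle is the first one: making the backward recursion $\zeta_{n+1}=A^{-1}(k^{-1}\zeta_n)$ well posed for every $n$ --- that is, ensuring each $k^{-1}\zeta_n$ lies in the domain of $A^{-1}$ and that no iterate leaves the space on which $A$ is invertible. Once the correct normalization is pinned down (state space $C_0[0,1]$, equivalently $f(0,\cdot)\equiv1$), both the verification that $h^\zeta$ solves (\ref{eh}) and the injectivity argument are routine.
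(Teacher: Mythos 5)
Your proof follows essentially the same route as the paper: the paper's argument is exactly the backward recursion $\zeta_n=kA\zeta_{n+1}$ justified by invertibility of $A$ (Condition \ref{c3}), the layer-by-layer definition $h^\zeta_{t,x}=\zeta_n(t)$ for $x\in W_n$, and the same one-line verification of (\ref{eh}). The only difference is that you supply two details the paper leaves implicit --- the normalization $Af(0)=0$ forcing the state space $C_0[0,1]$ with $\zeta(0)=0$, and the injectivity of the correspondence between solutions of (\ref{e5}) and splitting Gibbs measures --- both of which are correct and strengthen rather than alter the argument.
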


\subsection{Discussions} First our construction (Theorem \ref{t1}) is
adaptation of the ART-construction. In particular, it follows from Theorem \ref{t1}
that if for the model (\ref{e1}) (satisfying the conditions of Theorem \ref{t1})
there are more than one Gibbs measures on a Cayley tree
of order $k_0$ then it has more than one Gibbs measures for any $k\geq k_0$.

Theorem \ref{t3} gives uncountable set of Gibbs measures. Taking any two of these
measures (i.e. corresponding to two values of $t\in [0,1]$) one can  use the argument
of the subsection 2.2. to extend the set
of Gibbs measures.  Zachary's construction is also a way to give uncountable set of Gibbs measures.

It is known that the set of all Gibbs measures of the model (\ref{e1})
is a non-empty, convex and compact subset in the set of all probability measures on $(\Omega,\mathcal B)$
(see \cite[Chapter 7]{Ge}). Therefore it is interesting to know the extreme elements (Gibbs measures)
of the set of all Gibbs measures. Checking extremality of a given Gibbs measure is a difficult problem.
Our constructions of measures  in Theorems \ref{t1}-\ref{zc} are based on known Gibbs measures,
if the known  measures are extreme then measures mentioned in Theorems \ref{t1}-\ref{zc} are extreme too.
In general, the problems of extremality of measures (mentioned in Theorems \ref{t1}-\ref{zc}) remain open.
 Since our analysis related to non-linear integral equations, it seems difficult to give examples
 satisfying Conditions \ref{c1}-\ref{c3}.
 
 \section*{ Acknowledgements}
 We thank all (three) referees for their helpful suggestions.

\end{document}